\newcommand{\Lam}{\Lambda} 
\newcommand{\al}{\alpha}
\newcommand{\be}{\beta}
\newtheorem{theorem}{Theorem}
\newtheorem{definition}{Definition}
\theoremstyle{remark}
\numberwithin{equation}{section}
\newcommand{\lf}{\mathcal{L}}
\newcommand{\bmm}{\mathbf{B}}
\newcommand{\lmm}{\mathbf{L}}
\newcommand{\umm}{\mathbf{U}}
\newcommand{\dmm}{\mathbf{D}}
\newcommand{\set}[1]{\left\{#1\right\}}
\newcommand{\R}{\mathcal{R}}
\newcommand{\Z}{\mathbb{Z}}
\newcommand{\mmod}{\mathcal{M}}
\newcommand{\vsub}{\mathcal{V}}
\newcommand{\nP}{\hat{P}}
\newcommand{\nQ}{\hat{Q}}
\newcommand{\secref}[1]{Section \ref{#1}}
\newcommand{\thmref}[1]{Theorem \ref{#1}}
\newcommand{\figref}[1]{Figure \ref{#1}}
\newcommand{\bra}[1]{\langle{#1}|}
\newcommand{\ket}[1]{|#1\rangle}
\newcommand{\bran}[1]{\langle{#1}}
\newcommand{\lfp}{\lf }
\newcommand{\db}{\mathbf{d}}
\newcommand{\eb}{\mathbf{e}}
\begin{document}

 \setcounter{page}{1}

\title{Bi-orthogonal Polynomial Sequences and the Asymmetric Simple Exclusion Process}

\author{R. Brak\thanks{rb1@unimelb.edu.au}\hspace{1ex}   and W. Moore   
    \vspace{0.15 in} \\
         Department of Mathematics,\\
         The University of Melbourne\\
         Parkville,  Victoria 3052,\\
         Australia\\
 }

\maketitle

\begin{abstract} 
 We state the  diffusion algebra equations of the stationary state of the three parameter ($\alpha$, $\beta$ and $q$) Asymmetric Simple Exclusion Process  as a linear functional $\lf$, acting on a tensor algebra. 
 From $\lf$ we construct a  pair of  sequences, $\set{P_n}$ and $\set{Q_m}$, of  monic polynomials which are bi-orthogonal, that is,  they satisfy \mbox{$\lf(P_n  Q_m )=\Lambda_n\delta_{n,m}$} (where $\Lambda_n$ is a scalar). The uniqueness and existence of the pair of sequences arises from the determinant of the bi-moment   matrix  whose elements satisfy a pair of  $q$-recurrence relations.   The determinant is evaluated using an LDU-decomposition.
 If the linear functional is represented as an inner product, $\lf(\cdot)=\bra{W} \cdot \ket{V}$ then the action of the polynomials $Q_n$ on the boundary vector $\ket{V}$ generate  a basis  $\ket{V_n}=Q_n\ket{V}$ whose orthogonal dual vectors are given by the action of $P_n$ on the dual boundary vector $\bra{W}$, that is $\bra{W_n}=\bra{W}P_n$.
  This basis  gives the representation of the algebra which is associated with the  Al-Salam-Chihara polynomials obtained by Sasamoto.

\end{abstract}

\vfill
\paragraph{Keywords:} Duality, bi-orthogonal polynomials, orthogonal polynomials, totally asymmetric simple exclusion process, LDU-decomposition, diffusion algebra
\newpage

\section{Introduction}\label{sec1}

The   Asymmetric Simple Exclusion Process  (ASEP) is a continuous time Markov process defined by particles hopping along a line of  $L$ sites -- see \figref{fig:hop}.  Particles hop on to the line on the left with rate $\alpha$, off at the right with rate $\beta$ and   they hop to neighbouring sites to the left with  rate $q$ and rate one to the right with the constraint that only one particle can occupy a site. 
 \begin{figure}[ht]
\begin{center}
\includegraphics{./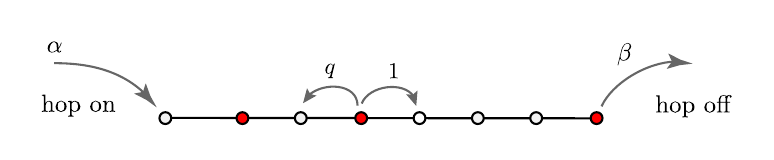}  
\caption{Three parameter ASEP hopping model}\label{fig:hop}
\end{center}
\end{figure}
  The problem of computing the  stationary probability distribution was solved by Derrida et.\ al.\ \cite{derrida97}    with the introduction of the  matrix product Ansatz (see below) which provides an algebraic method of computing the stationary distribution.  A recent review of the Asymmetric Exclusion Process may be found in  Blythe  and  Evans \cite{Blythe:2007uq}.   
 
The matrix product Ansatz expresses the stationary distribution of a given state as an inner product on a certain quotient ring of matrices. This ring is generated by two matrices $D$ and $E$ which satisfy the relation
\[
DE-qED-D-E=0\,.
\]
The inner product $\bra{W}\cdot \ket{V}$ is then defined by two vectors $\bra{W}$ and $\ket{V}$ (which we will refer to as boundary vectors) which satisfy $(\beta  D-1 )\ket{V}=0$ and $\bra{W}(\alpha E-1)=0$.

Rather than using $D$ and $E$ the algebra is  simplified by working with the shifted variables,   
\begin{subequations}
\begin{align}\label{eq_shift}
	d&=q'D-1\,,\\
	e&=q'E-1\,,
\end{align}	
\end{subequations} 
where $q'=1-q$. In these variables the above relation takes on the well known form (see for example, \cite{Lazarescu:2014ab}),
\[
de-q\, ed=q'\,.
\]

Computing representations of the $d$ and $e$ matrices fall into  natural cases. The case with  $\alpha $ and $\beta $ non-zero but $q=0$ we will refer to as the two parameter model and the case    $\alpha$, $\beta $ and $q$ non-zero as the three parameter model. There is also a five parameter model which has hopping off on the left with rate $\gamma$ and on on the right with rate $\delta$ which we do not directly address in this paper.  

The paper by Derrida et al \cite{derrida97} originally found three different representations for the two parameter case.   Representations of the three parameter model can be found in \cite{Sasamoto:1999aa} (and references therein) and for the five parameter model in \cite{UCHIYAMA:2003aa} (and references therein). 

If the matrices associated with a given representation  have sufficiently simple structure (eg.\ bi- or tri-diagonal) then they can be usefully interpreted as transfer  matrices  for  lattice path models \cite{Brak2004vf}. This leads to   combinatorial methods for computing  the inner product. 
 
Each matrix representation is  associated with a basis for the vector space upon which the matrices act. A very well known basis is the set $\ket{n}$, $n\ge0$, generated by the  action of $e^n$ on a vacuum vector $\ket{0}$  defined by $d\ket{0}=0$. 
For the three parameter model  this basis and its dual give  a representation in which the components of the boundary vectors are related to $q$-binomial coefficients and the tri-diagonal matrix $d+e$  gives a three term recurrence related to  $q$-Hermite polynomials \cite{UCHIYAMA:2003aa}.

The primary objective of this paper  is the basis associated with the three parameter model representation obtained by Sasamoto \cite{Sasamoto:1999aa} where   the tri-diagonal matrix $d+e$  gives a three term recurrence related to   the Al-Salam-Chihara polynomials \cite{gasper90}. 
We show that this basis   is associated with a pair of distinct sequences, $\set{P_n}$ and $\set{Q_m}$, of  polynomials. 
 The  polynomials $Q_n$ generate the  basis when acting on the  boundary vector $\ket{V}$  and the orthogonal dual vectors  are generated by the  polynomials $P_n$   acting on the  dual boundary vector $\bra{W}$.  
 Thus the basis is the set of vectors  $\ket{V_n}=Q_n\ket{V}$ and the orthogonal dual basis  set is $\bra{W_n}=\bra{W}P_n$.  Since the basis is generated by the boundary vector $\ket{V}$ and its dual $\bra{W}$ we will refer to this as the ``boundary basis''.
 
We show the  two polynomial sequences  are bi-orthogonal with respect to a certain linear functional $\lf$, that is       
$\lf\bigl(P_n   Q_m \bigr)=\Lambda_n\delta_{n,m}\,,$
where $\Lambda_n$ is a scalar.  For convenience the pair $\set{P_n}$ and $\set{Q_m}$ will be referred to as a bi-orthogonal pair of polynomial sequences, or BiOPS.

 The uniqueness and existence of the BiOPS arises from the determinant of the bi-moment   matrix whose elements are given by $B_{n,m}=\lf (d^n e^m )$. These elements  satisfy a pair of  $q$-recurrence relations. 
Unlike traditional orthogonal polynomials defined by Favard's theorem (see \cite{favard:1935qy} or \cite{Ismail:2005aa}) (ie.\ they satisfy a three-term recurrence relation), the BiOPS satisfy first order (uncoupled) $q$-recurrence relations.    We show that the BiOPS are intimately associated with the  decomposition of the bi-moment matrix into upper and lower triangular matrices. In fact the polynomial coefficients, when written in their own basis,  are the matrix elements of the lower (for $P_n$) and upper (for $Q_n$) matrices -- see equation \eqref{eq_polexpn}.

%

\section{The Algebra}
\label{sec_mods}


In this section we set up the tensor algebra used to represent the ASEP \cite{Crampe:2014ab}. Let $\R$ be the ring of integer coefficient commutative polynomials, $\Z[\al,\be,q]$ and $\mmod$   the  $\R$-module
\begin{equation}\label{eq_mmod}
\mmod= \bigoplus_{n\ge0}  \vsub_2^{\otimes n}
\end{equation}
where $\vsub_2$ is a free rank two $\R$-module with generators $ d,e$. Here $\vsub^0$ denotes the ring $\R$  of the module and $\vsub_2^{\otimes n}=\vsub_2\otimes\vsub_2\otimes\cdots\otimes\vsub_2$ ($n$ factors). 
 
The homogeneous submodule $\vsub_2^{\otimes n}$, of degree $n$, is generated by the  standard  monomial basis elements $e_{i_1} \otimes  e_{i_2} \cdots  \otimes e_{i_n} $ where $e_i\in\set{d,e}$. 
 For brevity  we will frequently omit the tensor product symbol, thus $d^m e^n$ denotes  $d^{ \otimes m}\otimes e^{ \otimes n}$ etc. 

We use the  three parameter version of the  original matrix Ansatz algebra equations of Derrida et al \cite{derrida97} as modified in \cite{corteel:2010rt}. The latter form  allows for arbitrary monomial pre- and post-factors ($u$ and $v$ in the equations below). 
The original algebra was stated   in terms of matrices and vectors. Here we give a slightly more abstract version by using  a  linear functional and use the shifted variables $d$ and $e$ rather than $D$ and $E$.
\begin{definition} \label{thm_cw}

Let $u,v$ be any monomial basis elements of $\mmod$. 
The \mbox{$\R$-module}  homomorphism  $\lf :\,\mmod\to \R$ is defined by the following equations:
\begin{subequations}\label{eq_cw}
\begin{align}
 \lf(u\otimes (d\otimes  e - q\, e\otimes  d -q' )\otimes v)&= 0\label{eq_cw_a}\\
 \lf(u\otimes (d - b ) )&=0\label{eq_cw_b}\\
 \lf(  (e-a)\otimes v)&=0\label{eq_cw_c}
\end{align}
\end{subequations}
where $a=  q'/\al-1 $, $b=q' /\be-1 $, with $\lf(1)=1$  and extended linearly to other elements of $\mmod$.

\end{definition}

 The reasons for the slightly more abstract linear functional formulation are as follows. The primary reason is because this is how traditional three-term recurrence  polynomial orthogonality can be formulated  (see Favard's theorem \cite{favard:1935qy}). 
 This in turn allows for a direct combinatorial  construction of   orthogonality  \cite{Viennot1985ah} without going via any integral representations. 
 It also allows for other representations of the linear functional such as via double integral measures \cite{M.-Bertola1:2002aa}  or via inner products as was done in the original Derrida et.\ al.\ paper \cite{derrida97}.

The matrix product Ansatz of  \cite{derrida97}  for the stationary state can now be (trivially) restated   using  the   linear functional $\lf$.
\begin{theorem}[ Derrida, Evans,  Hakin  and  Pasquier \cite{derrida97}]
The stationary state probability distribution, $f(\tau)$, of the two parameter ASEP for the system in state $\tau=(\tau_1,\cdots,\tau_L)$, is given by
\begin{align}
f(\tau) =&\frac{1}{Z_L}\, \lf\left(\prod_{i=1}^L(\tau_i d+(1-\tau_i)e)\right)\, \label{eq_prob}
\intertext{where}
Z_L  =&\lf\biggl((d+ e)^L\biggr)\,\label{norm_eq}
\end{align}
and $\tau_i=1$ if site $i$ is occupied and zero otherwise.
\end{theorem}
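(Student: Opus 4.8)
The plan is to verify directly that the right-hand side of \eqref{eq_prob} defines a stationary measure, i.e.\ that it lies in the kernel of the adjoint of the ASEP generator, reproducing the original argument of Derrida, Evans, Hakim and Pasquier \cite{derrida97} but carried out entirely inside the linear form $\lf$. Write $F(\tau)=\lf(X_1\otimes\cdots\otimes X_L)$ for the unnormalised weight, where $X_i=e_1$ when $\tau_i=1$ and $X_i=e_2$ when $\tau_i=0$, so that $f(\tau)=F(\tau)/Z_L$. Stationarity of the master equation is the statement that for every configuration $\tau$ the gain terms balance the loss terms,
\begin{equation*}
\sum_{\tau'} w(\tau'\!\to\!\tau)\,F(\tau') \;=\; \Bigl(\sum_{\tau''} w(\tau\!\to\!\tau'')\Bigr)F(\tau)\eqcomma
\end{equation*}
where the three families of transitions are bulk hops $10\to01$ at unit rate across each bond $(i,i+1)$, injection $0\to1$ at site $1$ with rate $\al$, and ejection $1\to0$ at site $L$ with rate $\be$.

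First I would substitute the Ansatz into this balance equation and localise each bulk term relative to its surrounding factors $A=X_1\otimes\cdots\otimes X_{i-1}$ and $B=X_{i+2}\otimes\cdots\otimes X_L$: a configuration carrying $e_1\otimes e_2$ across bond $(i,i+1)$ loses weight $-\lf(A\otimes e_1\otimes e_2\otimes B)$, while a configuration carrying $e_2\otimes e_1$ there gains the weight of its unique source, which carries $e_1\otimes e_2$ at that bond. The engine of the proof is the bulk relation \eqref{eq_cw_a}, which inside $\lf$ reads $e_1\otimes e_2=\ab\,(e_1+e_2)$: applying it to each hopping term replaces the degree-$L$ expression by the sum of two degree-$(L-1)$ expressions $\ab\,\lf(A\otimes e_1\otimes B)+\ab\,\lf(A\otimes e_2\otimes B)$, in which the bond has effectively been contracted to a single site.

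Next I would show that, once every bulk term has been reduced in this way, the contributions attached to neighbouring bonds cancel in pairs, so that the interior of the chain contributes nothing and only two endpoint residues survive, one at site $1$ and one at site $L$. Concretely, I would exhibit auxiliary ``current'' products and check that the reduced bulk sum is a telescoping sum $\sum_i(\Phi_i-\Phi_{i+1})$ whose only uncancelled terms sit at the two ends of the chain; this is precisely the discrete continuity equation for the ASEP current, and it is where the algebra does its work. The two surviving residues are then matched by the remaining injection and ejection terms once the boundary relations \eqref{eq_cw_b} and \eqref{eq_cw_c} are used to strip the end factors, $\lf(a\otimes e_1)=\al\,\lf(a)$ at the right end and $\lf(e_2\otimes b)=\be\,\lf(b)$ at the left end. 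Here one must respect the fact that \eqref{eq_cw} is written in the reciprocal variables $1/\al,1/\be$ flagged after \defref{thm_cw}, which is why the rate-$\al$ and rate-$\be$ processes are closed off by the boundary relations carrying the \emph{opposite} rate.

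Finally the normalisation \eqref{norm_eq} is immediate and needs no algebra: summing the product over all $\tau\in\{0,1\}^{L}$ factorises site by site, $\sum_{\tau}\prod_i X_i=(e_1+e_2)^{L}$ since $\sum_{\tau_i}X_i=e_1+e_2$ at each site, whence $\sum_\tau F(\tau)=\lf\bigl((e_1+e_2)^{L}\bigr)=Z_L$ and $f$ is a genuine probability distribution. The main obstacle is the middle step: organising the bulk reduction so that the interior telescopes cleanly and the two endpoint residues are exactly the boundary terms, with the correct signs and $\ab$ factors. Equivalently, the real work is to produce the auxiliary current products that make the telescoping manifest and to confirm that the reciprocal-variable relations \eqref{eq_cw_b}--\eqref{eq_cw_c} annihilate the endpoints; everything else is just linearity of $\lf$.
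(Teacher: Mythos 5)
First, note that the paper itself offers no proof of this theorem: it is imported verbatim from \cite{derrida97}, restated in the $\lf$-notation, so there is no internal argument to compare yours against. Your strategy --- verify global balance of the master equation, reduce each bulk term with \eqref{eq_cw_a}, telescope the interior, and absorb the two endpoint residues with \eqref{eq_cw_b}--\eqref{eq_cw_c} --- is exactly the original DEHP argument and is the right one. However, as you yourself concede, the proposal stops at precisely the point where the proof lives: you never exhibit the auxiliary ``current'' products $\Phi_i$, and without them the claim that ``the contributions attached to neighbouring bonds cancel in pairs'' is an assertion, not a proof. The reduction of $\lf(A\otimes e_1\otimes e_2\otimes B)$ to $\ab\,\lf(A\otimes(e_1+e_2)\otimes B)$ by itself does not produce a telescoping sum; the four cases $(\tau_i,\tau_{i+1})\in\{0,1\}^2$ contribute different things (two of them contribute nothing at all), and one needs a single uniform expression covering all four before anything can telescope.

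The missing ingredient is small but essential. Work in the rescaling $e_1=\ab D$, $e_2=\ab E$ of the DEHP algebra, so that the hatted elements become the \emph{scalars} $\hat e_1=\ab$ and $\hat e_2=-\ab$ in $\R$, and set $\Phi_i(\tau)=\lf(X_1\otimes\cdots\otimes X_{i-1}\otimes \hat X_i\otimes X_{i+1}\otimes\cdots\otimes X_L)=\hat X_i\,\lf(X_1\cdots X_{i-1}X_{i+1}\cdots X_L)$, where $X_j=\tau_je_1+(1-\tau_j)e_2$ and $\hat X_j=\tau_j\hat e_1+(1-\tau_j)\hat e_2$. Then a direct check of the four cases using \eqref{eq_cw_a} shows that the net (gain minus loss) contribution of bond $(i,i+1)$ to the balance equation equals $\Phi_{i+1}(\tau)-\Phi_i(\tau)$ in every case; for instance for $(\tau_i,\tau_{i+1})=(1,0)$,
\begin{equation*}
-\lf(A\otimes e_1\otimes e_2\otimes B)=-\ab\,\lf(A\otimes(e_1+e_2)\otimes B)=\hat e_2\,\lf(A\otimes e_1\otimes B)-\hat e_1\,\lf(A\otimes e_2\otimes B)=\Phi_{i+1}-\Phi_i\eqcomma
\end{equation*}
and the $(0,0)$ and $(1,1)$ cases give $0$ automatically. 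The bulk sum telescopes to $\Phi_L-\Phi_1$, and the boundary relations then give exactly $+\Phi_1$ for the site-$1$ terms (e.g.\ for $\tau_1=1$ the gain is $\al\,\lf(e_2\otimes b)=\ab\,\lf(b)=\hat e_1\lf(b)=\Phi_1$, which also makes precise your remark about the ``crossed'' rates: rate $\al$ times the boundary eigenvalue $\be$ produces $\ab=\hat e_1$) and $-\Phi_L$ for the site-$L$ terms, so the total vanishes. Finally you should close the argument by invoking irreducibility of the finite-state chain (uniqueness of the stationary vector) together with $Z_L=\sum_\tau F(\tau)=\lf\bigl((e_1+e_2)^L\bigr)\neq 0$, so that the normalised weights are indeed \emph{the} stationary distribution; your factorisation argument for \eqref{norm_eq} is fine.
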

%

\section{Bi-Orthogonal Pair of Polynomial  Sequences}
\label{biops}

Consider the pair of sequences, 
\begin{equation}
 \set{P_n(d)}_{n\ge0},\qquad \text{}\qquad  \set{Q_n(e)}_{n\ge0} 
\end{equation}
of  monic polynomials (where $P_n$ and $Q_n$ are degree $n$). We wish to determine if it is possible to find such a pair which are  orthogonal with respect to $\lf$. 
In particular, do there exist  such sequences for which $\lf(P_n  Q_m)=\Lam_n\delta_{n,m}$ with $\Lam_n>0$? These two sequences will then give us a basis and a dual basis for the representation associated with the Al-Salam-Chihara polynomials obtained in \cite{Sasamoto:1999aa,UCHIYAMA:2003aa}.

In order to show such a pair of sequences do indeed exist we    
consider  the   infinite dimensional `bi-moment matrix', $ \bmm $,  whose matrix elements are given by
\begin{equation}\label{binom}
 \bmm_{n,m} =\lf(d^{ n}\,    e^{ m})\,, \qquad n,m\ge 0\,.
\end{equation}
Note, all matrices have rows and columns that are indexed by non-negative integers.

The bi-moment  matrix elements  satisfy a pair of partial difference equations as  given in the following theorem. 
\begin{theorem}\label{bimompascal}
The bi-moment matrix elements, \eqref{binom}, satisfy the recursions 
\begin{subequations}\label{derecs}
\begin{align}
\bmm_{i,j} &= (1-q^{i}) \bmm_{i-1,j-1} + aq^{i} \bmm_{i,j-1} \label{erec}\,,\\
\bmm_{i,j} &= (1-q^{j}) \bmm_{i-1,j-1} + bq^{j} \bmm_{i-1,j} \label{drec}\,,
\end{align}	
\end{subequations}
$i,j>0$ with boundary values $\bmm_{0,j}=a^{j} \textrm{ and } \bmm_{i,0}=b^{i}$, $i,j\ge0$. 
\end{theorem}
Thus the    matrix  looks like 
\begin{equation*}
\bmm=\left(\begin{matrix}
1 & a & a^2 & \hdots \\
b & 1+h_0q  & a(1+h_0q^2 ) & \hdots \\
b^2 & b(1+h_0q^2 )& 1+h_0q (1+q+h_1q^2 ) &  \hdots \\ 
\vdots & \vdots & \vdots & \ddots\\
\end{matrix}\right)
\end{equation*}
where $h_n=abq^n-1$.

\begin{proof} 
The idea of the proof is to  repeatedly use  the commutator \mbox{$de-qed=q'$} from \eqref{eq_cw_a}  to move an  $e$ (resp.\ $d$) to the left (resp.\ right). Thus,
\begin{align}
	d^n e^m &= d^{n-1} (de) e^{m-1}\\
		&= d^{n-1} (q' + qed) e^{m-1}\\
		&=q'd^{n-1}e^{m-1}+qd^{n-1}ede^{m-1} 
\intertext{and}
	d^{n-1}ede^{m-1} &=  q'd^{n-1}e^{m-1} + qd^{n-2}ed^{2}e^{m-1}\,.
\intertext{Commuting the lone $e$ to the left gives}	
	d^{n-1}ede^{m-1} &= \left( (1-q)\sum_{k=1}^{n-1} q^{k-1} \right)  d^{n-1} e^{m-1}   
+ q^{n-1}e d^ne^{m-1}  	
\intertext{and since the sum telescopes   the result is}
	d^n e^m &=(1-q^n) d^{n-1} e^{m-1}   + q^{n}e d^ne^{m-1}\,.
\end{align}
Using \eqref{eq_cw_c} gives \eqref{erec}. Similarly for \eqref{drec}
\end{proof}

As will be shown below, the existence of the BiOPS requires that the determinant of the $(n+1)\times (n+1) $ sub-matrix  
$$\bmm^{(n)}=(\bmm_{i,j})_{0\le i,j\le n}$$ 
be non-zero for all $n\ge 0$. Thus we require the following theorem.
\begin{theorem}\label{detthm}
Let $\bmm^{(n)}=(\bmm_{i,j} )_{0\le  i,j\le n}$ be the  truncated $(n+1)\times(n+1)$ bi-moment matrix whose elements are defined by \thmref{bimompascal}.  Then 
\begin{equation}\label{eq_bidet}
	\det\,\bmm^{(n)}=\prod_{i=1}^n\left(1-q^i\right)^{n+1-i} \left(1-ab\, q^{i-1}\right)^{n+1-i}\,.
\end{equation}
\end{theorem} 
The value of the determinant is a simple consequence of the LDU-decomposition of the bi-moment matrix as given by the following theorem.
\begin{theorem}\label{tm_trirecrel}
The LDU-factorisation of the bi-moment matrix is 
\begin{equation}
	\bmm=\lmm \dmm \umm 
\end{equation}
where the three matrices have elements determined by the following $q$-recurrences. The lower triangular matrix elements satisfy
\begin{subequations}\label{eq_ulrec}
\begin{align}
\lmm_{i,j} &= \lmm_{i-1,j-1} + b q^{j} \lmm_{i-1,j} 
\intertext{with $\lmm_{i,0} = b^{i}$ and  $\lmm_{0,j} = \delta_{0,j}$. The upper triangular matrix elements satisfy}
\umm_{i,j} &= \umm_{i-1,j-1} + a q^{i} \umm_{i,j-1} 
\intertext{with $  \umm_{0,j} = a^{j} \mbox{ and } \umm_{i,0} = \delta_{i,0}$ and the diagonal matrix elements satisfy}
\dmm_{j} &= (1-abq^{j-1})(1-q^j)\dmm_{j-1}
\end{align}	
\end{subequations}
with  $\dmm_0=1$.
\end{theorem}
The proof of the theorem is detailed in  \secref{sec_LDU}. The LDU-decomposition of the bi-moment matrix
is at the centre of the whole calculation. Once the decomposition is obtained most of the other results are straightforward consequences. For the case of  $q=0$ the LDU-decomposition and determinant in the $D$ and $E$ variables has been obtained by Krattenthaler\cite{Krattenthaler:2002aa}. 

We now use the bi-moment matrix to show the existence and uniqueness of the  BiOPS.  For $n,m\ge 0$  require the bi-orthogonality condition
\begin{equation}\label{bi-orthog}
\lf(P_n(d)\,  Q_m(e))=\Lam_m\delta_{n,m}
\end{equation}
where $\Lam_n$ is  a sequence of non-zero normalisation factors  determined by $\lf$ and the monic constraint. 

If this bi-orthogonality is translated into the inner product form of the original matrix product Ansatz, then the equation  is asking the question: Does there exist polynomials  $P_n(\db)$ and $Q_m(\eb)$  in the matrices $\db$ and $\eb$ such that
\begin{equation}\label{eq_maxbiorth}
	\bra{W}P_n(\db)\,Q_m(\eb)\ket{V}=\Lam_m\delta_{n,m}\, 
\end{equation}
for vector  $\ket{V}$ and dual vector $\bra{W}$ defined by 
$$(\db-b\mathbf{1})\ket{V}=0$$ 
and   
$$\bra{W}(\eb-a\mathbf{1})=0\,?$$ 
If so we get sequences of basis vectors $ \ket{\hat{V}_n}_{n\ge0}$  and  their orthonormal  (with respect to $\lf$) duals  $ \bra{\hat{W}_n}_{n\ge0}$,  given by 
\begin{equation}\label{eq_orthobasis}
	\bra{\hat{W}_n}=\bra{W}P_n(\db)\frac{1}{\sqrt{\Lam_n}}\qquad\text{and}\qquad\ket{\hat{V}_n}=\frac{1}{\sqrt{\Lam_n}}Q_n(\eb)\ket{V}\,,
\end{equation}
where $\ket{\hat{V}_0}=\ket{ V }$ and $\bra{\hat{W}_0}=\bra{W }$. We normalise so that $ \bran{W }\ket{ V }=1$. 
From these sequences, and since the identity matrix is $\mathbf{1}=\sum_{n\ge 0}\ket{\hat{V}_n} \bra{\hat{W}_n}$,  we get matrix representations of $\db$ and $\eb$ via 
\begin{equation}\label{eq_bdrybasis}
	\db_{n,m}=\bra{\hat{W}_n} d \ket{\hat{V}_m}\qquad \text{and}\qquad \eb_{n,m}=\bra{\hat{W}_n} e\ket{\hat{V}_m}
\end{equation}
which satisfy  \eqref{eq_cw_a}.  

This procedure for constructing basis vectors (see for example \cite{Lazarescu:2014ab}) is analogous to the quantum oscillator basis set $\ket{n}_{n\ge0}$   constructed by the action of $\mathbf{e}^n$ on a vacuum vector $\ket{0}$ which is defined by $\mathbf{d}\ket{0}=0$, that is, $\ket{n}=\mathbf{e}^n\ket{0}$. The dual vectors are given via the action of $\mathbf{d}^n$ on the dual vacuum $\bra{0}$,   
that is  $\bra{n}\,\prod_{i=1}^n(1-q^i)=\bra{0}\mathbf{d}^n$.
In the BiOPS case the boundary vector $\ket{V}$ plays the role of the vacuum vector and the basis set $\ket{V_n}_{n\ge0}$ is generated by the action of $Q_n(\eb)\ne \eb^n$ on $\ket{V}$ defined by $(\db-b\mathbf{1})\ket{V}=0$. 
The dual vectors $\bra{W_n}$ are similarly related to the action of $P_n(\db)$ on the dual boundary vector $\bra{W}$.


%
%

Returning to the question of the existence of bi-orthogonal polynomials  we  have the following theorem   stating a unique   pair of sequences exists.
\begin{theorem} \label{thm_othog}
 Let   $\set{P_n(d)}_{n\ge0}$ and $\set{Q_n(e)}_{n\ge0}$ be a pair of sequences of monic polynomials   satisfying 
\begin{equation}
\lf(P_n  Q_m)=\Lam_n\delta_{n,m}
\end{equation}
where  the linear functional $\lf$ is defined by  equations \eqref{eq_cw}. 
Then $\set{P_n}_{n\ge0}$ and $\set{Q_n}_{n\ge0}$ exist and are unique with
\begin{equation}\label{eq_lamn}
	\Lam_n=\prod_{i=1}^n (1-abq^{i-1})(1-q^i) \,.
\end{equation}
for $n>0$ and $\Lam_0=1$.
\end{theorem}

\begin{proof}
The existence of $\set{P_n}$ follows by applying Cramer's rule   to the system of linear equations obtained by writing 
\begin{equation}\label{poldef}
P_n(d)=\sum_{k=0}^n a_k^{(n)} d^k  
\end{equation}
 with  $a^{(n)}_n =1$  and for $ k\le n$,
\begin{equation}
\lf(P_n e^k)=\Lam_n \delta_{n,k}\,.
\end{equation}
Since   $  e^k = Q_k(e)+\sum_{\ell=0}^{k-1} c_\ell^{(k)} Q_\ell(e)$  and using equations \eqref{binom} and   \eqref{bi-orthog}  we get the system of equations
\begin{equation}
(a^{(n)}_0,a^{(n)}_1,\dots, a^{(n)}_n)\left(\begin{matrix}
\bmm_{0,0} & \bmm_{0,1} & \hdots  &\bmm_{0,n}\\
\bmm_{1,0} & \bmm_{1,1} & \hdots  & \bmm_{1,n}\\
\vdots & \vdots &\ddots & \vdots\\
\bmm_{n,0} & \bmm_{n,1} &  \hdots  &\bmm_{n,n}
\end{matrix}\right)=
\left( 
 0,
 \dots,
 0,
  \Lam_n \right)\,.
\end{equation}
Since for all $ n\ge 0$ we have from \thmref{detthm} that $\det\,\bmm^{(n)}\ne 0$  and thus  the system has a unique solution given by Cramer's rule 
\begin{subequations}\label{expoly}
\begin{equation}\label{expolyp}
P_n(d)=\frac{1}{\det\,\bmm^{(n-1)}}\det\left(\begin{matrix}
\bmm_{0,0} & \bmm_{0,1} & \hdots  &\bmm_{0,n-1}  &1\\
\bmm_{1,0} & \bmm_{1,1} & \hdots  & \bmm_{1,n-1}& d\\
\vdots & \vdots &\ddots & \vdots & \vdots\\
\bmm_{n-1,0} & \bmm_{n-1,1} &  \hdots  &\bmm_{n-1,n-1}& d^{n-1}\\ 
\bmm_{n,0} & \bmm_{n,1} &  \hdots  &\bmm_{n,n-1}& d^n 
\end{matrix}\right) \,.
\end{equation}
Similarly 
\begin{equation}\label{expolyq} 
Q_n(e)=\frac{1}{\det\,\bmm^{(n-1)}}\det\left(\begin{matrix}
\bmm_{0,0} & \bmm_{0,1} & \hdots & \bmm_{0,n-1} &\bmm_{0,n}\\
\bmm_{1,0} & \bmm_{1,1} & \hdots & \bmm_{1,n-1} & \bmm_{1,n}\\
\vdots & \vdots &\ddots & \vdots & \vdots\\
\bmm_{n-1,0} & \bmm_{n-1,1} &  \hdots &\bmm_{n-1,n-1}  &\bmm_{n-1,n}\\
1 & e &\hdots & e^{n-1} & e^n 
\end{matrix}\right)\,.
\end{equation}	
\end{subequations}
The scalar $\Lam_n$ follows from the monic requirement  which gives
\[
	\Lam_n=\det\bmm^{(n)}/\det\bmm^{(n-1)}\,.
\]
and hence from \eqref{eq_bidet} we get \eqref{eq_lamn}.
\end{proof}


To find the explicit form of the polynomials we need to evaluate the two determinants \eqref{expolyp}  and \eqref{expolyq}. This requires the LDU-decomposition of the two matrices leading to the following lemma.

\begin{theorem}  \label{lem_polrec}
The  pair of sequences of monic polynomials   $\set{P_n(d)}_{n\ge0}$ and $\set{Q_n(e)}_{n\ge0}$   satisfy the recurrence relations 
\begin{subequations}\label{eq_polexpn}
\begin{align}
	P_n(d) &= d^n-\sum_{k=0}^{n-1} \lmm_{n,k} P_k(d)\label{eq_polexpnp}\,,\\
	Q_n(e) &= e^n-\sum_{k=0}^{n-1} Q_k(e)\umm_{k,n} \label{eq_polexpnq}\,,
\end{align}
\end{subequations}
where $\lmm_{n,k}$ and $\umm_{k,n}$ are the matrix elements of the lower triangular $\lmm$ and  upper triangular $\umm$    are given by \eqref{eq_ulrec}.
\end{theorem}

\begin{proof}

The theorem  follows from the LDU-decomposition (detailed in \secref{sec_LDU}) of the bi-moment matrix. This decomposition  reduces \eqref{expoly} to the single determinant forms 
\begin{subequations}
\begin{equation}\label{lduexpolyp}
P_n(d)=\det\left(\begin{matrix}
\lmm_{0,0}		& 0 			&\hdots	& 0				& 1			\\
\lmm_{1,0}		& \lmm_{1,1} 	&\hdots	& 0				& d			\\
\vdots			& \vdots 		&\ddots	& \vdots			& \vdots		\\
\lmm_{n-1,0}	& \lmm_{n-1,1}	&\hdots	& \lmm_{n-1,n-1}	& d^{n-1}	\\
\lmm_{n,0}		& \lmm_{n,1}	&\hdots	& \lmm_{n,n-1}	& d^n 
\end{matrix}\right)
\end{equation}
and
\begin{equation}\label{lduexpolyq} 
Q_n(e)=\det\left(\begin{matrix}
\umm_{0,0}	& \umm_{0,1}	& \hdots	& \umm_{0,n-1}	& \umm_{0,n}		\\
0			& \umm_{1,1}	& \hdots	& \umm_{1,n-1}	& \umm_{1,n}		\\
\vdots		& \vdots		& \ddots	& \vdots			& \vdots			\\
0			& 0				& \hdots	& \umm_{n-1,n-1}	& \umm_{n-1,n}	\\
1			& e				& \hdots	& e^{n-1}		& e^n 
\end{matrix}\right)\,.
\end{equation}
\end{subequations}
Expanding  \eqref{lduexpolyp} using  the bottom row leaves a sub-matrix determinant which reduces down to a $k \times k$ determinant of the same form   as \eqref{lduexpolyp}  but with $n=k$ and hence is   $P_k(d)$. Thus we get \eqref{eq_polexpnp}. Similarly for  \eqref{eq_polexpnq}.
\end{proof}

We now use \eqref{lduexpolyp} to find explicit forms for $P_n$ and $Q_n$.
\begin{theorem}\label{tm_biorthpoys}
The  pair of sequences of monic polynomials $\set{P_n(d)}_{n\ge0}$ and $\set{Q_n(e)}_{n\ge0}$   are given by
\begin{align*}
	P_{n}(d) &= \prod_{k=1}^n (d-bq^{k-1})\,,  
\intertext{and}
	Q_{n}(e) &= \prod_{k=1}^n (e-aq^{k-1})   
\end{align*}
with $P_0=Q_0=1$.
\end{theorem}
 \begin{proof}
The theorem is equivalent to     $P_n$ and $Q_n$ satisfying the first order recurrence relations
\begin{subequations}\label{eq_pqrecrel}
\begin{align}
P_{n+1}(d) &= (d-bq^{n})P_{n}\label{eq_pqrecrelp} \\
Q_{n+1}(e) &= (e-aq^{n})Q_{n}\label{eq_pqrecrelq}  
\end{align}	
\end{subequations}
which we  prove by induction using the recurrence relations \eqref{eq_ulrec} satisfied by the upper and lower triangular matrix elements.
From \thmref{lem_polrec} we get 
\begin{align}
dP_n = d^{n+1} - \sum_{k=0}^{n-1} \lmm_{n,k} dP_k \label{xpn}\,.
\end{align}
The induction assumption is  
\begin{equation}\label{eq_dprec}
dP_{n-1} = P_n + bq^{n-1} P_{n-1}	\,.
\end{equation}
Using  \eqref{eq_dprec}  and \eqref{xpn} gives
\begin{equation}\label{eq_notlan}
dP_n = d^{n+1} - \sum_{k=0}^{n-1} \lmm_{n,k} (P_{k+1} + bq^{k} P_{k}) \,.
\end{equation}
From \thmref{lem_polrec} we have
\begin{align*}
P_{n+1} &= d^{n+1} - \sum_{k=0}^n \lmm_{n+1,k} P_k
\intertext{and using the recurrence relation  for $\lmm$ from \thmref{tm_trirecrel} gives}
P_{n+1} &=   d^{n+1} - \sum_{k=0}^{n-1} \lmm_{n,k} (P_{k+1} + bq^{k} P_{k}) -bq^n \lmm_{n,n}P_n\,. 
\intertext{Using \eqref{eq_notlan} and since $\lmm_{n,n}=1$ we get}
P_{n+1}&= d P_n - bq^nP_n \,.
\end{align*}
Since $n=1$ is true,  by induction, we have shown  \eqref{eq_pqrecrelp}. A similar induction proof gives \eqref{eq_pqrecrelq}.
\end{proof}


\section{Matrix representation in boundary basis}  

In this section we briefly discuss a representation of the linear functional $\lf$ by an inner product using a matrix representation of the tensor algebra. 

The polynomials $Q_n(e)$ generate a basis set $\set{v_n}_{n\ge0}$ for the module  \eqref{eq_mmod} by their action on the boundary monomial element $v_0$ satisfying  $\lf\bigl(u(d-b)v_0\bigr)=0$, that is,   generated by the set of elements $v_n=Q_n(e)v_0$. Denote the  module in this basis by $V_{Q}$. 
Note, equation \eqref{eq_cw_b} shows that in the tensor space $v_0=1$ but is usually denoted $\ket{V}$ when $\lf$ is represented by an inner product.

It is well known that since $V_Q$ is infinite dimensional that its dual space is not spanned by the elements, $v^*_n$ dual to $v_n$ (ie.\ $v^*_n(v_m)=\delta_{n,m}$).  Thus it is not clear \emph{a priori} that all linear functionals $\lf$ that satisfy \eqref{eq_cw} can be expressed as an element in the dual sub-module spanned by $v^*_n$.  
However, for the computational purposes of the ASEP model we only require a non-trivial such linear functional. It turns out to be  sufficient to restrict ourselves to linear functionals in the dual sub-module spanned by $v^*_n$.  Call this dual sub-module $V^*_P$. Thus we seek a linear functional $\lf$ satisfying  \eqref{eq_cw} that exists in the dual sub-module $V^*_P$. 

 \thmref{thm_othog} tells us that given the set $\set{v_n}$ there exists a unique dual set $v^*_n= P_n(d)$. We first find a matrix representation of the quotient module 
 \begin{equation}\label{eq_quomod}
   \mathcal{S}=\mmod/ (d  e - q\, ed -q')  	
 \end{equation}
and then address the question of how to extract $\lf(g)$, $g\in\mathcal{S}$, from the matrix representation of $g$.

 In order to obtain a matrix representation  we need to use   normalised sequences $\{\nP_n\}$, $\{\nQ_n\}$ of the two polynomials sequences. If \eqref{bi-orthog} is replaced by
\begin{equation}
\lf(\nP_n\, \nQ_m)=\delta_{n,m}\,,
\end{equation}
then clearly
\begin{subequations}\label{eq_othonorm}
\begin{align}
\nP_n=&P_n/\sqrt{\Lambda_n}\,,\label{eq_pnorm}\\
\nQ_n=&Q_n/\sqrt{\Lambda_n}\,.\label{eq_pnorm2}
\end{align}
\end{subequations}
gives a  bi-orthonormal pair of polynomial sequences.

The recurrence relations \eqref{eq_pqrecrel} for $P_n$ and $Q_n$  can be used   to compute the following two moments which lead to a matrix representation.  

\begin{theorem} \label{thm_fmom}
Let $P_n$ and $Q_n$ be the polynomials of  \thmref{tm_biorthpoys}. The two first moments
\begin{subequations}
\begin{align}
X_{n,m}&=\lf(P_n\, d\, Q_m), \\
Y_{n,m}&=\lf(P_n \,e\, Q_m), 
\end{align}	
\end{subequations}
for $n,m\ge0$, are given by
\begin{subequations}
\begin{align}\label{fmmat}
X_{n,m}&=  \Lam_{n+1} \delta_{n+1,m}+b q^{n} \Lam_{n}\delta_{n,m}\,,  
 \\
Y_{n,m}&= \Lam_{m+1} \delta_{n,m+1}+a q^{m} \Lam_{m}\delta_{n,m}\, ,  
\end{align}	
\end{subequations}
for $n,m\ge 0$. 
\end{theorem}

The orthonormal versions of the polynomials give rise to a representation of the quotient module \eqref{eq_quomod}.
 \begin{theorem} \label{tm_matprod}
 	 The infinite dimensional matrices $\db$ and $\eb$ with matrix elements
\begin{subequations}\label{eq_matrefdssd}
\begin{align}
	\db_{n,m}&=\lf(\nP_n \,d\, \nQ_m)=X_{n,m}/\sqrt{\Lam_n\Lam_m}\,, \\
	\eb_{n,m}&=\lf(\nP_n\, e \,\nQ_m)=Y_{n,m}/\sqrt{\Lam_n\Lam_m}\,,, 
\end{align}
\end{subequations}
for $n,m\ge0$, give a matrix representation of the quotient module \eqref{eq_quomod}.
\end{theorem}
The theorem is proved by direct verification that the matrices \eqref{eq_matrefdssd} satisfy the quotient relation $\mathbf{d  e} - q\, \mathbf{e d} = q'\mathbf{1}$.  

The matrices \eqref{eq_matrefdssd} have a simple bi-diagonal structure
\begin{subequations}\label{eq_matrep}
\begin{equation}\label{eq_matrepd}
\db=\left(\begin{matrix}
b & \sqrt{c_0} & 0 & \hdots \\
0 & bq & \sqrt{c_1} & \hdots \\
0 & 0 & bq^2 &  \hdots \\ 
\vdots & \vdots & \vdots & \ddots\\
\end{matrix}\right)
\end{equation}
and
\begin{equation}\label{eq_matrepe}
\eb=\left(\begin{matrix}
a & 0 & 0 & \hdots \\
\sqrt{c_0} & aq & 0 & \hdots \\
0 & \sqrt{c_1} & aq^2 &  \hdots \\ 
\vdots & \vdots & \vdots & \ddots\\
\end{matrix}\right)
\end{equation}
\end{subequations}
where $c_n=(1-q^{n+1})(1-ab q^n)$. These are the same matrices obtained by Sasamoto \cite{Sasamoto:1999aa}.



%

The following theorem states the relationship between $\lf$ and the matrix representation.  
 \begin{theorem} 	 \label{tm_lfex}
 Let $\mathbf{m}$ be the matrix representation of an element $m$ in the quotient module \eqref{eq_quomod}. Then
\begin{equation}\label{eq_lmonval}
	\lfp(m)=  \mathbf{m}_{0,0}  	
\end{equation} 
where $\mathbf{m}_{0,0}$ is the $(0,0)$ matrix element of $ \mathbf{m}$.
 \end{theorem}

Equation \eqref{eq_lmonval} is the matrix product representation of $\lf$ conventionally written 
\begin{align}
	\lfp\bigl( {e_1} {e_2} \cdots  {e_k}\bigr)&=\bra{W}  \mathbf{e_1}\mathbf{e_2} \cdots \mathbf{e_k}\ket{V}
	\intertext{where  $\mathbf{e_i}\in\set{\mathbf{d},\mathbf{e}}$. In the basis $\ket{V_n}$ we have }
	\bra{W}=\bra{W_0}=\left( 1, 0, \dots, \right)\,,
	&\qquad\text{and} \qquad  \ket{V}=\ket{V_0}=\left( 1, 0, \dots, \right)^T\,. 
\end{align} 
were $T$ denotes the transpose. 
\begin{proof}[Proof of \thmref{tm_lfex}.]
	Clearly \eqref{eq_lmonval} defines a linear functional from the space of matrices to $\R$. It remains to verify that such a functional satisfies the equations \eqref{eq_cw}. 
 Equation \eqref{eq_cw_a} is satisfied as $\mathbf{d  e }- q\, \mathbf{ed} -q'\mathbf{1}$  is the zero matrix. 
 Equation  \eqref{eq_cw_b}  requires $\lf(u(d-b))=  \mathbf{u(d}- b \mathbf{1})_{0,0}=0$ for any $\mathbf{u}\in\mathcal{S}$,  which trivially verified using the matrix   \eqref{eq_matrepd}. Similarly for  \eqref{eq_cw_c}.
 \end{proof}
Since the matrices  $\db$ and $\eb$ are upper and lower bi-diagonal respectively, their sum is clearly tri-diagonal  and hence related to traditional three term recurrence orthogonal polynomials. 
In this case the tri-diagonal matrix is 
\[
\mathbf{W}_{n,m}=\lf(\nP_n( d + e) \nQ_m)=\frac{X_{n,m}+Y_{n,m}}{\sqrt{\Lam_n\Lam_m}}\,.
\]
Thus the three term recurrence relation of the polynomials  $\set{T_n(x)}_{n\ge0}$, is
\begin{equation}
W_{n,n-1} T_{n-1}+(W_{n,n}-x) T_n + W_{n,n+1} T_{n-1}=0
\end{equation}
with initial values $T_0=1$ and $ T_{-1}=0$. These  are essentially the Al-Salam-Chihara polynomials \cite{gasper90}.

\section{LDU-decomposition of the Bi-Moment Matrix}
\label{sec_LDU} 

In this section we derive the decomposition of the bi-moment matrix into a product of a lower triangular matrix $\lmm$, a diagonal matrix $\dmm$ and an upper triangular matrix $\umm$ as given in \thmref{LDUdecomp}. In order to do this we extend a theorem in \cite{A.R.-Moghaddamfar:2008aa} by extracting the upper and lower matrices. 

We start with the definition of matrices whose elements are given by a recurrence relation as stated in \cite{A.R.-Moghaddamfar:2008aa}. We will refer to such a matrix as a recursively defined matrix.

\begin{definition}
Let $\alpha = (\alpha_i)_{i \geq 0}$, $\beta = (\beta_i)_{i \geq 0}$, $\gamma = (\gamma_i)_{i\geq 0}$, $\mu = (\mu_i)_{i \geq 0}$, $\nu = (\nu_i)_{i \geq 0}$, $\epsilon = (\epsilon_i)_{i \geq 0}$   and   $\lambda = (\lambda_i)_{i \geq -1}$ be given sequences. Let 
\begin{displaymath}
\begin{cases}
\Phi(i,j) = \epsilon_{i-1} \gamma_{j-1} + \nu_{i-1} \mu_{j-1} & \text{ for } i, j \geq 1,\\
\Psi(i,j) = \epsilon_{i-1} \lambda_{j-1} + \nu_{i-1} & \mbox{ for } i \geq 1, j \geq 0,\\
\Omega(i,j) = [\alpha_i - \Psi(i,0)\alpha_{i-1}](\beta_j - \mu_{j-1} \beta_{j-1})& \mbox{ for } i, j \geq 1.\\
\end{cases}
\end{displaymath}
A recursively defined  matrix, is a matrix $A=(a_{i,j})$ of order $n+1$ defined  by  $a_{0,j}=\alpha_0 \beta_j$, $a_{i,0}=\beta_0 \alpha_i$ for $0 \leq i,j \leq n$ and 
\begin{displaymath}
a_{i,j} = \mu_{j-1} a_{i,j-1} + \Phi(i,j) a_{i-1,j-1} + \Psi(i,j) a_{i-1,j} + \Omega(i,j)  
\end{displaymath}
for $1 \leq i,j \leq n$.
\end{definition}
For matrices whose elements satisfy the above definition the following theorem gives the decomposition.
\begin{theorem} \label{LDUdecomp}
The unique LDU-decomposition for a recursively defined matrix is 
\begin{displaymath}
A = \lmm \cdot \dmm \cdot \umm
\end{displaymath}
where $\lmm$ (resp. $\umm$) is a lower (resp. upper) triangular matrix with diagonal entries 1 and $\dmm$ is a diagonal matrix. Also $\lmm=(\lmm_{i,j})_{0 \leq i,j \leq n} \mbox { , }l = (l_{i,j})_{0 \leq i,j \leq n}$ and $D^{(1)}$ is a diagonal matrix with diagonal entries $(D^{(1)}_{i})_{0 \leq i \leq n}$ such that  
\begin{align*}
l &= \lmm \cdot D^{(1)} \\
\intertext{and}
D^{(1)}_{i} &= \alpha_0 \prod^{n-1}_{k=1} \epsilon_k
\end{align*}
where $l_{i,0}=\alpha_i$, $l_{0,j}= \delta_{0,j} $ and 
\begin{displaymath}
l_{i,j} = \epsilon_{i-1} l_{i-1,j-1} + \Psi(i,j) l_{i-1,j} 
\end{displaymath}
and where $\umm=(\umm_{i,j})_{0 \leq i,j \leq n} \mbox { , }u = (u_{i,j})_{0 \leq i,j \leq n}$ and $D^{(2)}$ is a diagonal matrix with diagonal entries $(D^{(2)}_{j})_{0 \leq j \leq n}$ such that 
\begin{align*}
 u =& D^{(2)} \cdot \umm \\
 \intertext{and}
D^{(2)}_{j} =& \sum_{k=1}^{j}\Big\lbrace \beta_{k-1} (\gamma_{k-1} + \mu_{k-1} \lambda_{k-2})\prod^{k-2}_{r=0} (\lambda_{j-1} - \lambda_{r-1})\prod^{j}_{s=k+1} (\gamma_{s-1} +\mu_{s-1} \lambda_{j-1})\Big\rbrace\\  &+ \beta_j \prod^{j-1}_{t=0} (\lambda_{j-1} - \lambda_{t-1})
\end{align*}
with $u_{0,j}=\beta_j \mbox{ and } u_{i,0}=\delta_{i,0}$  and 
\begin{displaymath}
u_{i,j} =\mu_{j-1} u_{i,j-1} + (\gamma_{j-1}+\mu_{j-1}\lambda_{i-2}) u_{i-1,j-1} + (\lambda_{j-1}-\lambda_{i-2}) u_{i-1,j}
\end{displaymath}  
and the diagonal matrix $\dmm$ has diagonal elements $\dmm_i$ such that 
\begin{displaymath}
\dmm_i = D_i^{(1)}D_i^{(2)}\,.
\end{displaymath}
\end{theorem}

This is a modified version of the main theorem in \cite{A.R.-Moghaddamfar:2008aa}. There the theorem states the determinant of a recursively defined matrix and proves the result using a LU-decomposition. \thmref{LDUdecomp} converts the LU-decomposition from the proof into the unique LDU-decomposition.

\begin{proof}[Proof of \thmref{tm_trirecrel}]
From the recursion in \eqref{erec} we get that the bi-moment matrix is a recursive matrix with
\begin{displaymath}
\mu_j = 0, \hspace{.3cm} \nu_i = 0, \hspace{.3cm} \epsilon_i = 1, \hspace{.3cm} \gamma_j = 1- q^{j+1}, \hspace{.3cm} \lambda_j = bq^{j+1}, \hspace{.3cm} \alpha_i = b^{i}, \hspace{.3cm} \beta_j = a^{j}
\end{displaymath}
therefore by \thmref{LDUdecomp} we get
\begin{align*}
\lmm_{i,j} &=  \lmm_{i-1,j-1} + bq^{j} \lmm_{i-1,j} 
\intertext{$\mbox { where } \lmm_{i,0} = b^{i} \mbox{ and } \lmm_{0,j} = \delta_{0,j}$ with }
\dmm_{j} &= \sum_{k=1}^{j+1} (ab)^{k-1} \prod_{r=0}^{k-2}(q^{j}-q^r) \prod_{s=k}^{j} (1-q^s)\\ 
\intertext{which can be shown to satisfy}
\dmm_{j} &= (1-abq^{j-1})(1-q^j)\dmm_{j-1}\,.
\end{align*}
To get the upper triangular matrix, we will instead find the lower triangular matrix of the transpose of the bi-moment matrix. From the recursion in \eqref{drec} , we get that the transpose of the bi-moment matrix is a recursive matrix with
\begin{displaymath}
\mu_j = 0, \hspace{.3cm} \nu_i = 0, \hspace{.3cm} \epsilon_i = 1, \hspace{.3cm} \gamma_j = 1- q^{j+1}, \hspace{.3cm} \lambda_j = aq^{j+1}, \hspace{.3cm} \alpha_i = a^{i}, \hspace{.3cm} \beta_j = b^{j}
\end{displaymath}
therefore by \thmref{LDUdecomp} 
\begin{align*}
\umm_{i,j}^\intercal =  \umm^\intercal_{i-1,j-1} + aq^{j} \umm^\intercal_{i-1,j} 
\end{align*}
where $\umm^\intercal_{0,j} = a^{j}$ and $\umm^\intercal_{i,0} = \delta_{i,0}$.
Taking the transpose of this matrix gives the required result. 
\end{proof}

\section{Concluding Remarks}

We have shown that the representation associated with the Al-Salam-Chihara polynomials obtained by Sasamoto \cite{Sasamoto:1999aa} is a matrix representation of the quotient module  $\mmod/(de-qed-q' )$ with respect to a basis $\ket{V_n}=Q_n\ket{V}$ generated by the boundary vector $\ket{V}$ via the action of the polynomial sequence $\set{Q_n}$. 
The vectors, $\bra{W_n}=\bra{W}P_n$, dual to $\ket{V_n}$ are generated by the dual boundary vector $\bra{W}$ through the action of the polynomials $\set{ P_n }$. 
The two sequences $\set{P_n}$ and $\set{Q_n}$ are bi-orthogonal with respect to the linear functional $\lf$ defined by the equations \eqref{eq_cw}, that is $\lf(P_nQ_m)=\Lam_n\delta_{n,m}$. 
Using the bi-moment matrix \eqref{binom} we showed that the two  bi-orthogonal sequences exist and are unique. Through the LDU-decomposition of the bi-moment matrix it is possible to find explicit forms for the bi-orthogonal sequences in the case of the three parameter model.


It would also be of interest to compute the five parameter versions of $P_n$ and $Q_n$ which would presumably be associated with the same Askey-Wilson polynomials \cite{Askey:1975yq} obtained in \cite{UCHIYAMA:2003aa,corteel:2010rt}. Preliminary work shows the five parameter generalisation of the two $q$-recurrence relations, \eqref{derecs},  for the bi-moment matrix are straightforward to derive but that the LDU-decomposition of the resulting matrix is significantly more complicated.

Finally, what about the combinatorics of this formalism? The connection between classical orthogonal polynomials and the combinatorics of lattice paths is well established \cite{Viennot1985ah,Flajolet1980rr} as is the combinatorics of the ASEP model \cite{Blythe:2007uq}. Clearly the bi-diagonal structure of the $\mathbf{d}$ and $\mathbf{e}$ matrices connect to binomial lattice paths (aka.\ fully directed paths) and the tridiagonal matrix $W_{n,m}=\lf(P_n d e Q_m)$ to Motzkin paths. However, in this instance there is no Hankel matrix of moments -- it is replaced by the bi-moment matrix.

%
\section{Acknowledgement} We would like to   thank the Australian Research Council (ARC) and the Centre of Excellence for Mathematics and Statistics of Complex Systems (MASCOS) for financial support. I would also like to thank  the referees for their useful comments.

\newpage
\bibliographystyle{unsrt}  
\bibliography{_bibTek_Biorthog}

\end{document}